\documentclass[12pt]{article}
\usepackage[utf8]{inputenc}
\usepackage[english]{babel}
\usepackage{amsfonts,amsthm}
\usepackage{amsmath,amssymb}
\usepackage{xcolor}
\usepackage{tikz}
\usepackage{cite}

\newcommand{\be}{\begin{equation}}
\newcommand{\ee}{\end{equation}}
\newcommand{\bea}{\begin{eqnarray}}
\newcommand{\eea}{\end{eqnarray}}

\renewcommand{\Re}{\mathrm{Re }}

\newcommand{\triplet}[3]{ \left( \begin{array}{c}#1 \\ #2 \\ #3 \end{array}\right) }
\newcommand{\mmatrix}[4]{\left(\!\!\begin{array}{cc}#1 & #2 \\ #3 & #4 \end{array}\!\!\right)}
\newcommand{\mmmatrix}[9]{ \left(\! \begin{array}{ccc}#1 & #2 & #3\\ #4 & #5 & #6\\ #7 & #8 & #9\\ \end{array}\!\right) }

\newcommand{\Tr}{\mathrm{Tr}}
\newcommand{\Z}{\mathbb{Z}}
\newcommand{\RR}{\mathbb{R}}
\newcommand{\CC}{\mathbb{C}}

\newcommand{\id}{\mathbf{1}}

\newtheorem{theorem}{Theorem}
\newtheorem{proposition}{Proposition}
\newtheorem{lemma}{Lemma}

\def\lsim{\mathrel{\rlap{\lower4pt\hbox{\hskip1pt$\sim$}}
    \raise1pt\hbox{$<$}}}         
\def\gsim{\mathrel{\rlap{\lower4pt\hbox{\hskip1pt$\sim$}}
    \raise1pt\hbox{$>$}}}         


\topmargin = -1.5cm
\textwidth 16.8 cm
\textheight 23.5 cm
\evensidemargin = -0.3 cm
\oddsidemargin = -0.3 cm

\title{Bounded-from-below conditions for $A_4$-symmetric 3HDM}

\author{N.~Buskin$^{1}$\thanks{E-mail: nvbuskin@gmail.com}, Igor~P.~Ivanov$^{2}$\thanks{E-mail: ivanov@mail.sysu.edu.cn}  
\\
  {\small $^1$ Dept. of Mechanics and Mathematics, Novosibirsk State University, 630090 Novosibirsk, Russia}\\
  {\small $^2$ School of Physics and Astronomy, Sun Yat-sen University, 519082 Zhuhai, China}
  }

\begin{document}

\maketitle

\bigskip
\begin{abstract}
Deriving necessary and sufficient conditions for a scalar  potential to be bounded from below (BFB)
is a difficult task beyond the simplest cases.
Recently, a set of BFB conditions was proposed for the $A_4$-invariant three-Higgs-doublet model (3HDM).
However, that set of conditions relied on numerical scan, and a complete analytic proof was lacking.
Here, we fill this gap. We prove that the conjectured BFB conditions are indeed necessary and sufficient
within the neutral Higgs subspace.
We bypass technically challening direct algebraic computations with a novel technique
that relies on an auxiliary function, which is related to the Higgs potential but which is easier to analyze.
This technique may finally be sufficient to tackle the more involved case of the original Weinberg's 3HDM model.
\end{abstract}

\tableofcontents

\section{Introduction}

\subsection{Physics motivation} 

The LHC discovery of the 125 GeV scalar boson \cite{Aad:2012gk,Chatrchyan:2012gu,Khachatryan:2016vau} completed the Standard Model 
and proved that the Brout-Englert-Higgs mechanism \cite{Englert:1964et,Higgs:1964ia,Higgs:1964pj,Guralnik:1964eu} is indeed at work.
However the Higgs sector does not need to be as minimal as postulated in the Standard Model (SM).
The simple idea that the Higgs doublets come in generations, just like fermions,
offers many opportunities to address the several shortcomings of the SM.
This class of models, known as the $N$-Higgs-doublet models (NHDM), first appeared in 1970's \cite{Lee:1973iz,Weinberg:1976hu},
and many of its versions were studied in thousands of papers since then, 
see e.g. reviews \cite{Branco:2011iw,Ivanov:2017dad} and references therein.

Before a phenomenological analysis of a multi-Higgs model can begin, 
one must ensure the mathematical self-consistency of the model. 
One of the requirements is that the Higgs potential
be bounded from below (BFB), at least at the tree level, in order for the vacuum state to exist. 
This implies that the coefficients of the quartic part of the potential
must satisfy certain inequalities known as the BFB conditions.
In simple cases, these can be written directly \cite{Branco:2011iw}; 
in other cases, one needs to resort to more elaborate mathematical tools.

In the general 2HDM, the necessary and sufficient BFB conditions are known \cite{Ivanov:2006yq}.
For three Higgs doublets, the problem remains unsolved in its generality.
However, in particular versions of the 3HDM equipped with additional global symmetries,
the structure of the Higgs potential simplifies, 
and in certain cases the necessary and sufficient BFB conditions are also known.
These examples include the 3HDM with the $U(1)\times U(1)$ symmetry group \cite{Faro:2019vcd}, 
where the copositivity methods are at work \cite{Kannike:2012pe,Kannike:2016fmd},
and the $S_4$-symmetric 3HDM \cite{Ivanov:2020jra}.

The situation with the exact BFB conditions for the $A_4$ symmetric 3HDM, a rather popular model 
\cite{Ma:2001dn,Lavoura:2007dw,Morisi:2009sc,Machado:2010uc,Ishimori:2010au,Toorop:2010ex,Boucenna:2011tj,Ivanov:2012fp,Pramanick:2017wry,Chakrabarty:2018yoy},
remains unsettled.
On the one hand, it is very similar to $S_4$ 3HDM, the only difference being that 
one of the coefficient is allowed to be complex.
However, this extra parameter renders the task of establishing the exact BFB conditions surprisingly difficult. 
Several previous phenomenological papers used conflicting conditions,
which, as was pointed out in the recent work \cite{Ivanov:2020jra}, were either incomplete or incorrect.
The same paper \cite{Ivanov:2020jra}, applying the methods of \cite{Degee:2012sk}, 
proposed the full set of the necessary and sufficient BFB conditions 
under the simplifying assumption that the vacuum state is neutral, not charge breaking.
However, the derivation relied on certain results of \cite{Degee:2012sk} which had only been demonstrated
by a numerical scan, not analytically.
Although these BFB conditions were confirmed in all the numerical examples tested,
a direct analytical proof of their validity was still lacking.

The purpose of this paper is to fill this gap.
We prove a theorem which confirms that the conditions conjectured in \cite{Ivanov:2020jra}
indeed represent the necessary and sufficient BFB conditions 
for $A_4$ 3HDM with neutral vacuum. We also develop a novel approach which bypasses the technical challenges of 
straightforward calculations by introducing an auxiliary function,
which is related to the Higgs potential but which is easier to analyze.
We believe this method could be eventually applicable to other popular 3HDMs, 
in particular to Weinberg's \cite{Weinberg:1976hu} and Branco's \cite{Branco:1980sz} models.

Since the task is purely mathematical,
we reformulate the problem in notation which is convenient for mathematical analysis.
We will switch back to the physics language in the last section of the paper.
An appendix provides further auxiliary mathematical details referred in the main text.

\subsection{Mathematical formulation}

Let $\vec z = (z_1, z_2, z_3)$ be a vector in $\CC^3$;
$\bar z_j$ is the complex conjugation of $z_j$.
Consider the following real-valued polynomial (the quartic Higgs potential):
\begin{eqnarray}
V_4 &=& a\left(\bar{z}_{1}z_1 + \bar{z}_{2}z_2 + \bar{z}_{3}z_3\right)
+ 2b\left(|z_1z_2| + |z_2z_3| + |z_3z_1|\right)\nonumber\\
&+& d(\bar{z}_{1}z_2 + \bar{z}_{2}z_3 + \bar{z}_{3}z_1) 
+ \bar{d}(\bar{z}_{2}z_1 + \bar{z}_{3}z_2 + \bar{z}_{1}z_3)\,,\label{V4-1} 
\end{eqnarray}
where $a$ and $b$ are real, while $d$ can be complex.
Our task is to find the necessary and sufficient conditions on the parameters $a, b, d$
which guarantee that $V_4 \ge 0$ everywhere in $\CC^3$ (the BFB conditions). 

For real $d$, the full set of BFB conditions was recently established \cite{Ivanov:2020jra,Degee:2012sk}:
\begin{equation}
a \ge 0\,, \quad a+2b+2d \ge 0\,, \quad a+2b-d \ge 0\,, \quad a+b-d \ge 0\,.\label{S4-BFB}
\end{equation}
If $d < 0$, the last two conditions are weaker and can be dropped. If $d>0$, the second condition can be dropped, instead.
For a complex $d=|d|e^{i\gamma}$, the BFB conditions were \textit{conjectured} in the same paper \cite{Ivanov:2020jra} to be
\begin{eqnarray}
&&a \ge 0\,, \quad a+b-|d| \ge 0\,,\label{A4-BFB-1}\\[2mm]
&&a+2b+2|d|\cos\left(\gamma +\frac{2\pi k}{3}\right) \ge 0\,, \quad k = 1,2,3\,.\label{A4-BFB-2}
\end{eqnarray}
One can verify that for $\gamma=0$ or $\pi$ one obtains a set of conditions that is equivalent to Eqs.~\eqref{S4-BFB}.
The goal of the present paper is to prove the following theorem.
\begin{theorem}[Main theorem]
\label{Main-theorem}
The quartic potential $V_4$ defined in Eq.~\eqref{V4-1} is non-negative in the entire $\CC^3$ if and only if the conditions
\eqref{A4-BFB-1} and \eqref{A4-BFB-2} are satisfied.
\end{theorem}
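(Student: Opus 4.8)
The plan is to separate the argument into an easy necessity part and a hard sufficiency part, and to handle sufficiency by replacing the non-smooth, phase-constrained minimization of $V_4$ with the analysis of a more tractable auxiliary function. First I would pass to modulus--phase variables $z_j = r_j e^{i\theta_j}$ with $r_j \ge 0$. Since $V_4$ in \eqref{V4-1} is homogeneous of degree two under $\vec z \mapsto \lambda \vec z$, the BFB property is equivalent to $V_4 \ge 0$ on the unit sphere, so it suffices to locate the global minimum of
\[
V_4 = a\sum_j r_j^2 + 2b\sum_{i<j} r_i r_j + 2|d|\bigl(r_1 r_2\cos\psi_1 + r_2 r_3\cos\psi_2 + r_3 r_1\cos\psi_3\bigr),
\]
where $\psi_1 = \gamma + \theta_2 - \theta_1$, $\psi_2 = \gamma + \theta_3-\theta_2$, $\psi_3 = \gamma + \theta_1 - \theta_3$ are constrained only by $\psi_1+\psi_2+\psi_3 = 3\gamma \pmod{2\pi}$. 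Necessity is then immediate: a single nonzero component gives $a\ge0$; the choice $z_3=0$, $r_1=r_2$ with optimal relative phase $\cos\psi_1=-1$ gives $a+b-|d|\ge0$; and $r_1=r_2=r_3$ with the symmetric phases $\psi_1=\psi_2=\psi_3 = \gamma + 2\pi k/3$ (which respect the constraint) reproduces precisely the three inequalities \eqref{A4-BFB-2}.

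For sufficiency the crucial observation is that the $a$- and $b$-terms are phase-independent, so only the last bracket must be minimized over the angles. Writing $c_1=r_1r_2$, $c_2=r_2r_3$, $c_3=r_3r_1$, I would study
\[
\Phi(\vec c) = \min_{\psi_1+\psi_2+\psi_3=3\gamma}\bigl(c_1\cos\psi_1+c_2\cos\psi_2+c_3\cos\psi_3\bigr),
\]
which, as a minimum of functions linear in $\vec c$, is concave in $\vec c$. The direct Lagrange analysis of this minimum leads to the system $c_i\sin\psi_i=\mathrm{const}$, whose solutions carry nested radicals and numerous sign branches; this is exactly the computation we wish to avoid. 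Instead I would introduce an auxiliary function $\tilde V(\vec r)$ --- a smooth (polynomial) minorant of the phase-minimized potential that is tangent to it at the candidate minimizers --- and prove the equivalence that $V_4\ge0$ everywhere if and only if $\tilde V\ge0$. The concavity of $\Phi$ is the structural lever here: it forces the phase-minimized potential to be as negative as possible only at degenerate moduli, steering the global minimum toward the two-field boundary (which yields \eqref{A4-BFB-1}) and the equal-moduli symmetric points (which yield \eqref{A4-BFB-2}).

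The main obstacle I expect is the construction of $\tilde V$ and the proof that it faithfully captures the sign of the phase-minimized potential for \emph{unequal} moduli: one must show that, once the worst-case phases are chosen, the positive $a$- and $b$-contributions always compensate the negative $d$-contribution at least as well as they do at the symmetric and boundary configurations. This should require a case split on the sign of $b$ and on the ordering of $c_1,c_2,c_3$, together with the concavity of $\Phi$ and the elementary copositivity of the two-field quadratic form $a(r_1^2+r_2^2)+2(b-|d|)r_1r_2$, which already pins down \eqref{A4-BFB-1}. Once the reduction to the symmetric and boundary configurations is established, the remaining inequalities are exactly \eqref{A4-BFB-1}--\eqref{A4-BFB-2}, completing both directions.
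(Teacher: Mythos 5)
Your necessity argument is correct and coincides with the paper's: the vertices give $a\ge0$, the two-field configuration $r_1=r_2$, $z_3=0$ with antiparallel relative phase gives $a+b-|d|\ge0$, and the symmetric configurations with $\psi_1=\psi_2=\psi_3=\gamma+2\pi k/3$ reproduce \eqref{A4-BFB-2}. The sufficiency part, however, has a genuine gap: the auxiliary function $\tilde V$ that is supposed to carry the entire argument is never constructed, and the structural property you offer in its place does not do the work you claim. Concavity of $\Phi(\vec c)$ as a minimum of linear functions is true, but concavity drives a minimum to extreme points only when one minimizes over a convex polytope; here $\Phi$ is composed with the nonlinear map $\vec r\mapsto(r_1r_2,\,r_2r_3,\,r_3r_1)$ and added to the quadratic $a\sum_j r_j^2+2b\sum_{i<j}r_ir_j$, which is not concave in $\vec c$, so nothing in this observation forces the global minimum of the phase-minimized potential to the boundary or to the equal-moduli point. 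Moreover, a minorant ``tangent at the candidate minimizers'' presupposes that you already know where the global minimum sits, which is precisely what must be proved, and the claimed equivalence $V_4\ge0\Leftrightarrow\tilde V\ge0$ is both stronger than needed and harder to establish than the one-sided bound that actually suffices.

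For comparison, the paper resolves exactly this obstacle by a concrete construction. It restricts $V_4$ to $\mathcal T=\{|z_1|+|z_2|+|z_3|=1\}$, observes that the Lagrange condition for a constrained extremum can be recast as the eigenvalue problem $A(x,y,z)\vec w=\mu\vec w$ for the rescaled Hermitean matrix $A(x,y,z)=RAR$ with $R=\mathrm{diag}(|z_i|^{1/2})$, and defines $\varphi(x,y,z)=\min\,Spec\,A(x,y,z)$ on the two-dimensional simplex $T$. The decisive step is the one-sided global inequality $\min_{\mathcal T}V_4\ge\min_T\varphi+b$; after that, an explicit analysis of the characteristic polynomial (smooth interior critical points, non-differentiable branch crossings, and the boundary) shows that a nonzero minimum of $\varphi$ can occur only at the barycenters of $T$, where \eqref{A4-BFB-1} and \eqref{A4-BFB-2} are exactly the statements $\varphi+b\ge0$. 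To salvage your route you would need to replace the concavity heuristic by an equally concrete localization of the minimum over the moduli; as written, the proposal identifies the right difficulty but does not overcome it.
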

Before we go into details, let us comment that we have verified this 
statement numerically via a scan in the parameter space \cite{MCAworkshop}. However proving it analytically
turned out surprisingly difficult. One reason is that direct differentiation produced a system of coupled 
algebraic and trigonometric equations, which are hard to manipulate.
We also tried various geometric methods, but they gave only partial results.
We could obtain a complete proof only with a rather indirect method to be explained below.
However we admit that a simpler, more direct solution to the same problem may exist.

\section{The main theorem}

\subsection{Preliminary remarks}
Let us first rewrite the quartic potential \eqref{V4-1} as
\begin{eqnarray}
V_4 &=& \tilde a\left(\bar{z}_{1}z_1 + \bar{z}_{2}z_2 + \bar{z}_{3}z_3\right)
+ b\left(|z_1| + |z_2| + |z_3|\right)^2\nonumber\\
&+& d(\bar{z}_{1}z_2 + \bar{z}_{2}z_3 + \bar{z}_{3}z_1) 
+ \bar{d}(\bar{z}_{2}z_1 + \bar{z}_{3}z_2 + \bar{z}_{1}z_3)\,,\label{V-abcd-rank1-alg2} 
\end{eqnarray}
where $\tilde a = a-b$.
It can be written as a sum of a hermitean quadratic form in $\CC^3$ and an additional part:
\begin{equation}
V_4 = h(\vec z) + b f(\vec z)^2\,,\label{V4-3}
\end{equation}
where 
\begin{equation}
h(\vec z) = \vec z\,{}^\dagger A \vec z\,, \quad A = \mmmatrix{\tilde a}{d}{\bar d}{\bar d}{\tilde a}{d}{d}{\bar d}{\tilde a} \label{matrix-A}
\end{equation}
and 
\begin{equation}
f(\vec z) = |z_1|+|z_2|+|z_3|\,.\label{f}
\end{equation}
If the additional part were absent, $b=0$, the BFB conditions could be obtained in a straightforward way.
By direct computation, one finds the spectrum of the hermitean matrix $A$:
\begin{equation}
\lambda_{1,2,3} = \tilde a + 2|d| \cos\left(\gamma + \frac{2\pi k}{3}\right) = 
\tilde a + 2 \Re\left(d\, \omega^k\right)\,,\quad k = 1, 2, 3.\label{spectrum}
\end{equation}
Here, $\omega$ denotes the cubic root of one: $\omega = \exp(2\pi i/3)$. The corresponding normalized eigenvectors 
are
\begin{equation}
n_1 = \frac{1}{\sqrt{3}}\triplet{1}{\omega}{\omega^2}\,,\quad
n_2 = \frac{1}{\sqrt{3}}\triplet{1}{\omega^2}{\omega}\,,\quad
n_3 = \frac{1}{\sqrt{3}}\triplet{1}{1}{1}\,.\label{eigenvectors}
\end{equation}
In this case, non-negativity of $V_4$ in the entire $\CC^3$ is equivalent
to the requirement that all three eigenvalues be non-negative, $\lambda_k \ge 0$.
For $b=0$, these conditions coincide with \eqref{A4-BFB-2}, while the remaining conditions \eqref{A4-BFB-1} become redundant.

For a non-zero \textit{negative} $b$, the BFB conditions could also be found directly.
Since $V_4$ is a homogeneous function, checking that $V_4 \ge 0$ on the unit sphere $|z_1|^2+|z_2|^2+|z_3|^2 = 1$
is equivalent to checking $V_4 \ge 0$ in the entire $\CC^3$. 
In order to verify $V_4 \ge 0$ on the unit sphere, we can try to minimize, separately, the hermitean part $h(\vec z)$ 
and the additional piece $-|b| f^2$, that is, to maximize $f^2$. 
Let us rewrite $f^2$ on the unit sphere as
\begin{eqnarray}
(|z_1|+|z_2|+|z_3|)^2 &=& 3 (|z_1|^2+|z_2|^2+|z_3|^2) \nonumber\\
&&\ \ - \left(|z_1|-|z_2|\right)^2 - \left(|z_2|-|z_3|\right)^2 - \left(|z_3|-|z_1|\right)^2\nonumber\\
&=& 3 - \left(|z_1|-|z_2|\right)^2 - \left(|z_2|-|z_3|\right)^2 - \left(|z_3|-|z_1|\right)^2\,.
\end{eqnarray}
This function is maximized when $|z_1|=|z_2|=|z_3|$. 
Therefore, not only does one of the eigenvectors \eqref{eigenvectors} minimize $h(\vec z)$, but also it maximizes $f^2$.
We conclude that, for $b < 0$, the necessary and sufficient BFB conditions are
\begin{equation}
\lambda_k + 3b = a + 2b + 2|d| \cos\left(\gamma + \frac{2\pi k}{3}\right) \ge 0\,.
\end{equation}
These conditions coincide with \eqref{A4-BFB-2}, while \eqref{A4-BFB-1} remain redundant.

The analysis becomes complicated only when $b > 0$. Since we could not find an equally elegant shortcut to the answer,
we establish below an indirect method which proves that, even in this case, conditions
\eqref{A4-BFB-1} and \eqref{A4-BFB-2} give the necessary and sufficient BFB conditions. 

\subsection{The restricted problem and barycenters}

Let us restrict the analysis of $V_4$ to the following section:
\begin{equation}
{\mathcal T} =\{(z_1,z_2,z_3)\in \mathbb{C}\,|\,|z_1|+|z_2|+|z_3|=1\}.
\end{equation}
Clearly, $V_4|_{\mathcal T}=h(\vec{z})+ b$.
Every ray in $\CC^3$ starting from the origin intersects ${\mathcal T}$.
Since $V_4$ is a a homogeneous function, the condition that $V_4|_{\mathcal T} \ge 0$ is equivalent to 
$V_4\ge 0$ in the entire $\CC^3$.

Next, consider the following triangular region in $\RR^3$:
\begin{equation}
T=\{(x,y,z)\,|\,x+y+z=1,\  x\ge 0,\ y\ge 0,\ z\ge 0\}.
\end{equation}
We get a natural projection map $p\colon {\mathcal T} \to T$ defined by
$(z_1,z_2,z_3)\mapsto (|z_1|,|z_2|,|z_3|)$.
This projection map allows us to view ${\mathcal T}$ as a fiber bundle over the base space $T$
with fibers being the 3-tori $U(1)\times U(1)\times U(1)$ over the internal points of $T$,
2-tori $U(1)\times U(1)$ over the internal points of the sides of $T$, and 
circles $U(1)$ over the vertices.

Let us now consider the set of \textit{barycenters} of $T$: 
\begin{eqnarray}
\mbox{the internal barycenter:}&& (1/3,\, 1/3,\, 1/3)\,,\nonumber\\
\mbox{the side barycenters:}&& (1/2,\, 1/2,\, 0)\,,\quad (1/2,\, 0,\, 1/2)\,,\quad (0,\, 1/2,\, 1/2)\,,\label{barycenters}\\
\mbox{the vertices:}&& (1,\, 0,\, 0)\,,\quad(0,\, 1,\, 0)\,,\quad (0,\, 0,\, 1)\,.\nonumber
\end{eqnarray}
The following proposition clarifies the role of barycenters in the BFB problem.

\begin{proposition}
The conditions \eqref{A4-BFB-1}, \eqref{A4-BFB-2} hold if and only if $V_4 \ge 0$ 
in the fibers above all the barycenters \eqref{barycenters}.
\label{barycenter-proposition}
\end{proposition}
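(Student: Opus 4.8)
The plan is to reduce the statement to a direct, fiber-by-fiber computation, exploiting the fact that on any fiber of $p\colon \mathcal{T}\to T$ the moduli $|z_j|$ are frozen and only the relative phases of the $z_j$ vary. Writing $z_j=x_j e^{i\theta_j}$ with the $x_j\ge 0$ fixed by the base point, the quadratic part becomes $h(\vec z)=\tilde a\sum_j x_j^2 + 2\,\Re\!\big[d\,(x_1x_2 e^{i\phi_1}+x_2x_3 e^{i\phi_2}+x_3x_1 e^{i\phi_3})\big]$, where $\phi_1=\theta_2-\theta_1$, $\phi_2=\theta_3-\theta_2$, $\phi_3=\theta_1-\theta_3$ satisfy $\phi_1+\phi_2+\phi_3=0$. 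Thus minimizing $V_4=h+b$ over a fiber is a minimization over the two free phases only, and the whole proposition will follow once I evaluate this minimum over each of the three types of barycenter.

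First I would dispatch the two easy families. Over a vertex, say $(1,0,0)$, two moduli vanish, $h$ has no phase dependence, and $V_4=\tilde a+b=a$; hence $V_4\ge 0$ on vertex fibers is equivalent to $a\ge 0$. Over a side barycenter, say $(1/2,1/2,0)$, only one cross term survives and $V_4=\tfrac12(a+b)+\tfrac12\Re(d\,e^{i\phi})$, whose minimum over the single phase $\phi$ is $\tfrac12(a+b-|d|)$; hence non-negativity there is equivalent to $a+b-|d|\ge 0$. Together these reproduce exactly \eqref{A4-BFB-1}.

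The substantive case is the internal barycenter $(1/3,1/3,1/3)$, where all moduli equal $1/3$ and both phases remain free. Here $V_4=\tfrac13(a+2b)+\tfrac29|d|\,S(\phi_1,\phi_2)$ with $S=\cos(\gamma+\phi_1)+\cos(\gamma+\phi_2)+\cos(\gamma-\phi_1-\phi_2)$. The key claim, which I expect to be the main obstacle, is that $\min_{\phi_1,\phi_2}S=3\min_k\cos(\gamma+2\pi k/3)$. I would prove it by stationarity: setting the two partial derivatives to zero forces $\sin(\gamma+\phi_1)=\sin(\gamma+\phi_2)=\sin(\gamma-\phi_1-\phi_2)$, so on the compact torus the three angles are either all equal, giving $S=3\cos(\gamma+2\pi m/3)$, or two equal with the third supplementary, giving $S=-\cos 3\gamma$. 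It remains to show the all-equal branch wins, i.e. $3c_\ast\le -\cos 3\gamma$ where $c_\ast=\min_k\cos(\gamma+2\pi k/3)$. Writing $c_0,c_1,c_2=\cos(\gamma+2\pi k/3)$ as the three roots of $4x^3-3x-\cos 3\gamma$, Vieta gives $c_0+c_1+c_2=0$ and $c_0c_1c_2=\tfrac14\cos 3\gamma$, and with $c_\ast=c_0\le 0$ one finds $3+4c_1c_2=4(c_1+c_2)^2\ge 0$, whence $3c_0\le -4c_0c_1c_2=-\cos 3\gamma$. This yields $\min V_4=\tfrac13\min_k(\lambda_k+3b)$ on the internal fiber, using $\tilde a=a-b$ and the spectrum \eqref{spectrum}, so non-negativity there is equivalent to \eqref{A4-BFB-2}.

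Finally I would assemble the three equivalences. Since $V_4\ge 0$ on all barycenter fibers is the conjunction of the vertex, side, and internal conditions, and these are respectively $a\ge 0$, $a+b-|d|\ge 0$, and \eqref{A4-BFB-2}, this conjunction is precisely \eqref{A4-BFB-1} together with \eqref{A4-BFB-2}, proving both directions at once. The only genuinely delicate point is the two-angle optimization on the internal fiber; everything else is a short computation, and I would keep the argument clean by recording the Vieta identity as the lemma that decides which branch of critical points is the global minimum.
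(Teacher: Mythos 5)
Your proposal is correct, and its treatment of the vertex and side-barycenter fibers coincides with the paper's proof; the genuine difference lies in how you handle the internal barycenter. You minimize $V_4$ over the two free phases of the fiber $p^{-1}(1/3,1/3,1/3)$ directly: the stationarity conditions force $\sin(\gamma+\phi_1)=\sin(\gamma+\phi_2)=\sin(\gamma+\phi_3)$ with $\phi_1+\phi_2+\phi_3=0$, giving the critical values $3\cos(\gamma+2\pi m/3)$ and $-\cos 3\gamma$, and your Vieta computation with the roots of $4x^3-3x-\cos 3\gamma$ correctly establishes $3\min_k\cos(\gamma+2\pi k/3)\le -\cos 3\gamma$, so the all-equal branch wins and the fiber minimum is exactly $\tfrac13\bigl(a+2b+2|d|\min_k\cos(\gamma+2\pi k/3)\bigr)$. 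The paper instead avoids any phase optimization: for sufficiency it bounds $h(\vec z)\ge \|\vec z\|^2\min_k\lambda_k$ on the sphere $\|\vec z\|^2=1/3$, which contains the fiber, and for necessity it evaluates $V_4$ at the eigenvectors \eqref{eigenvectors} rescaled by $1/\sqrt 3$, which happen to lie in the fiber. The paper's route is shorter and shows in passing that the fiber minimum equals the minimum over the whole sphere of radius $1/\sqrt3$; your route is more computational but yields the complete list of critical phase configurations (including the saddle value $-\cos 3\gamma$), and it is self-contained in the sense that it never invokes the spectral decomposition of $A$ beyond the formula \eqref{spectrum}. Both arguments are valid, and your assembly of the three fiberwise equivalences into the stated biconditional is the same as the paper's.
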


\begin{proof}
Suppose the conditions \eqref{A4-BFB-2} are satisfied.
For any $\vec{z}$ satisfying $||\vec{z}||^2 = |z_1|^2+|z_2|^2+|z_3|^2=1/3$ we get
\begin{equation} 
h(\vec{z})\geqslant ||\vec{z}||^2 \cdot\underset{k=1,2,3}{\min} \,\lambda_k = \frac{1}{3}\ \underset{k=1,2,3}{\min} \,\lambda_k\,.
\end{equation}
Therefore, the function $h(\vec{z})+b \ge 0$ everywhere in the sphere 
$||\vec{z}||^2 =1/3$, 
including the space of all vectors $\vec{z}$ such that $|z_1|=|z_2|=|z_3|=1/3$,
which is the fiber above the internal barycenter: $p^{-1}(1/3, 1/3, 1/3)$. 
Within this subspace, $h(\vec{z})+ b$ coincides with $V_4$. Therefore, 
$V_4 \ge 0$ in the fiber above the internal barycenter.
 
Conversely, suppose 
$V_4 \ge 0$ everywhere in $p^{-1}(1/3, 1/3, 1/3)$.
Then, computing it over the vectors $\vec n/\sqrt{3}$, where $\vec n$ are given by \eqref{eigenvectors},
we obtain exactly the same expressions as in \eqref{A4-BFB-2} divided by 3. Therefore, they are non-negative.

Next, suppose the second condition of \eqref{A4-BFB-1} holds. 
A generic point in the fiber over the barycenter $(1/2, 1/2, 0)$ can be parametrized as
$(e^{i\alpha}/2,\, e^{i\beta}/2,\, 0)$.
Computing $V_4$ on this fiber, we get
\begin{equation}
V_4 = \frac{1}{2}\tilde a + \frac{1}{2}|d|\cos(\gamma+\beta-\alpha) + b \ge \frac{1}{2}\left(a+b-|d|\right)\ge 0\,.
\end{equation}
The same holds for the other two side barycenters.
Conversely, if $V_4 \ge 0$ everywhere in $p^{-1}(1/2,1/2,0)$, then one observes that 
its minimal value corresponds to the second condition of \eqref{A4-BFB-1}.

Finally, since $a = \tilde a + b$, it coincides with the value of $V_4$
in the barycenters above the vertices such as $p^{-1}(1,0,0)$. 
The proof is complete.
\end{proof}
This proposition allows us to recast the main theorem in the following form:
\begin{theorem}[Main theorem via barycenters]
The quartic potential $V_4 \ge 0$ in $\CC^3$ if and only if $V_4 \ge 0$ in the fibers above all the barycenters \eqref{barycenters}.
\end{theorem}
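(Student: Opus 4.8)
The forward implication is immediate: each barycenter fiber is a subset of $\CC^3$, so $V_4 \ge 0$ on all of $\CC^3$ trivially gives $V_4 \ge 0$ on those fibers. Combined with Proposition~\ref{barycenter-proposition}, proving the stated reformulation is equivalent to proving the original Main theorem, so the entire content is the reverse implication, which (again via the Proposition) is the sufficiency of conditions \eqref{A4-BFB-1} and \eqref{A4-BFB-2}. By the cases already disposed of in the preliminary remarks it suffices to treat $b > 0$. Using homogeneity I would restrict to the section $\mathcal T$, where $f \equiv 1$ and hence $V_4 = h(\vec z) + b$; the claim to establish then reads: if $\min h$ over each barycenter fiber is at least $-b$, then $\min_{\mathcal T} h \ge -b$.

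First I would pass from $\mathcal T$ to the base triangle $T$ via the projection $p$, minimizing $h$ over the torus fiber above each $(x,y,z)\in T$. Writing $z_j = r_j e^{i\theta_j}$ and $\psi_i = \theta_{i+1}-\theta_i$, the form becomes $h = \tilde a (r_1^2+r_2^2+r_3^2) + 2|d|\sum_{\mathrm{cyc}} r_i r_{i+1}\cos(\gamma+\psi_i)$ subject to $\psi_1+\psi_2+\psi_3 = 0$, so the fiberwise minimum defines a function $H$ on $T$ and the goal reduces to $\min_T H \ge -b$. The obstruction surfaces here: the phase stationarity condition is $r_i r_{i+1}\sin(\gamma+\psi_i) = \mathrm{const}$, a coupled trigonometric system admitting no usable closed form — exactly the difficulty flagged in the introduction, which blocks a direct computation of $H$.

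To bypass this, the plan is to introduce an auxiliary function $G$ (the device promised in the abstract) that lower-bounds $V_4$, is analytically tractable in a way $H$ is not, and coincides with $V_4$ on the barycenter fibers. The reduction to the seven barycenters would then be carried out stratum by stratum, exploiting the residual permutation symmetry of each face of $T$: on a stratum with symmetry group $\Gamma$ (the full $S_3$ on the interior, a $\Z_2$ swap on each edge, trivial on a vertex), any point $p$ satisfies $G(\text{barycenter}) = G(\frac{1}{|\Gamma|}\sum_{g\in\Gamma} g\cdot p) \le \frac{1}{|\Gamma|}\sum_g G(g\cdot p) = G(p)$ once $G$ is convex and $\Gamma$-invariant along that stratum, since the barycenter is the $\Gamma$-fixed point. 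Thus the minimum over each stratum is driven to its barycenter, and $\min_T G$ is attained among the barycenters, whence non-negativity on the barycenter fibers forces $G\ge 0$ and therefore $V_4 \ge 0$ throughout $\mathcal T$.

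The main obstacle is the construction and control of $G$: it must simultaneously stay below $V_4$, match it at the barycenters, and possess the face-wise convexity that the genuine phase-minimized function $H$ does not manifestly have. Since genuine global convexity of $G$ on $T$ would over-reduce the problem to the centroid alone — whereas \eqref{A4-BFB-1} shows the edge and vertex barycenters contribute independent conditions — the delicate point is to arrange convexity only within each face while letting the faces be compared by their barycenter values, so that different barycenters can supply the binding constraint in different parameter regimes. Verifying this face-wise property of $G$ across the whole parameter range $(a,b,|d|,\gamma)$ is where the real work lies and is precisely the step that replaces the intractable phase minimization; once it is in hand, the appeal to Proposition~\ref{barycenter-proposition} is routine.
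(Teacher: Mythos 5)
Your setup is right---the forward direction is trivial, the content is the sufficiency of \eqref{A4-BFB-1}--\eqref{A4-BFB-2} for $b>0$, and some auxiliary function on $T$ is needed to bypass the phase minimization---but the proof stops exactly where the proof has to begin: the function $G$ is never constructed, and you yourself flag its construction and the verification of its face-wise convexity as ``where the real work lies.'' As written, the argument is a conditional: \emph{if} there exists a $G$ that lower-bounds $V_4$, equals the fiberwise minimum of $V_4$ at the seven barycenters, and is convex and $\Gamma$-invariant on each stratum, \emph{then} symmetrization reduces everything to the barycenters. That conditional is fine, but producing such a $G$ is essentially the whole theorem, and there is concrete evidence that the convexity you demand is the wrong property to aim for. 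The paper's auxiliary function is $\varphi(x,y,z)=\min\,Spec\,\bigl(RAR\bigr)$ with $R=\mathrm{diag}(|z_1|^{1/2},|z_2|^{1/2},|z_3|^{1/2})$, as in \eqref{min-function}; it does satisfy $V_4\ge\varphi+b$ and reproduces the barycenter values, but it is \emph{not} face-wise convex: the critical-point analysis of Section~\ref{smooth-interior-minima-section} exhibits interior critical points at $x=y\neq z$ with $\mu_{xx}=\mu_{yy}=0$ and $\mu_{xy}\neq 0$, i.e.\ genuine saddles, which a convex $S_3$-invariant function on the interior cannot have. Moreover on $\partial T$ one branch of the spectrum is identically zero, so at an edge barycenter $\varphi$ can sit strictly below the fiberwise minimum of $V_4$ (when $\tilde a>|d|$), breaking your ``coincides at the barycenters'' requirement; the paper has to quarantine this in a separate case $\min\varphi=0$. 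So a Jensen-type averaging argument has nothing to act on for the natural candidate, and you give no alternative candidate.

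What the paper actually does instead of convexity is a direct analysis of $\varphi$: implicit differentiation of the characteristic polynomial \eqref{char-equation} shows that smooth interior critical points are either the centroid or saddles; non-differentiable points (eigenvalue crossings, where $\partial g/\partial\mu=0$) are ruled out as global minima by a trace argument ($\mu_1+\mu_2+\mu_3=\sigma_1/3$ forbids two simultaneous downward cusps); the boundary analysis lands on the edge barycenters and vertices; and all of this is done only for generic $(A,b)$, with a convexity-of-$\mathcal H$ limiting argument (Lemma~\ref{cone-lemma}) handling the degenerate parameters. None of these steps is present in, or implied by, your outline. To salvage your route you would have to either exhibit a $G$ with the stated properties (for instance, prove that the face-wise convex envelope of the true fiberwise minimum still agrees with it at each barycenter---which is not obviously easier than the original problem), or abandon convexity and fall back on the critical-point classification, which is the paper's path.
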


\subsection{The idea of the proof}\label{subsection-idea}

Let us now describe the main idea of the proof.
A straightforward analysis of $V_4$ in $\CC^3$ is difficult because of large dimensionality of the problem.
Even when restricted to ${\mathcal T}$, $V_4$ depends on four independent variables: two absolute values and two 
phases of $z_1$, $z_2$, $z_3$. Searching for its constrained minimum leads to a coupled system of algebraic and 
trigonometric equations, which are hard to solve.

We will avoid this problem by defining an auxiliary function $\varphi$, which depends only on the absolute values of $z_i$, 
not on their phases. This function has the important property that 
$V_4(\vec{z}) \geqslant \varphi(|z_1|,|z_2|,|z_3|) + b$ for all points in $\CC^3$ where $V_4$ has an extremum on ${\mathcal T}$.
This applies also to the point where $V_4$ has the global minimum on ${\mathcal T}$.
As a result, for any $\vec{z}\in \mathcal T$, the following chain of inequalities is true:
\begin{equation}
V_4(\vec{z})\geqslant \underset{\vec{z} \in \mathcal T}{\min}\,V_4(\vec{z}) \geqslant \underset{(x,y,z)\in T}{\min}\,\varphi(x,y,z)+b\,.\label{chain}
\end{equation}
Thus, instead of investigating the function $V_4$ in ${\mathcal T}$, we will look for the global minimum of $\varphi$ in $T$,
which depends only on two independent variables.

We will find that the global minimum of $\varphi$, if it is non-zero, can only be at the barycenters of $T$. 
In this case, we will show that the inequalities \eqref{A4-BFB-1}, \eqref{A4-BFB-2}
lead to $\underset{(x,y,z)\in T}{\min}\,\varphi(x,y,z)+b \ge 0$ and, therefore, to $V_4 \ge 0$ in the entire $\CC^3$.
If the global minimum of $\varphi$ is zero, we will deduce $V_4(\vec{z}) \ge 0$ directly from the above inequalities. 

We find it important to stress that we compare $V_4$ and $\varphi+b$ globally, not locally.
That is, we are not going to claim that all extremal points of $V_4$ correspond to extremal points of $\varphi$ or vice versa.
We just prove and then use the property that the value of $V_4$ at its global minimum is not smaller 
than the global minimum of $\varphi + b$. Therefore, it suffices to find the global minimum of $\varphi+b$ in $T$
and verify that the conditions \eqref{A4-BFB-1} and \eqref{A4-BFB-2} keep it non-negative.

\section{The function $\varphi$ and its global minima}

\subsection{Defining the auxiliary function $\varphi$}

Let us focus on the interior of $\mathcal T$, where $|z_i| \not = 0$.
In order to find a constrained extremum of $V_4$ on ${\mathcal T}\subset \mathbb{C}^3$ defined by $f^2 = 1$,
we use the method of Lagrange multipliers and differentiate $V_4$ and $f$ with respect to the antiholomorphic coordinates $\bar z_i$.
We remind that, if $z = x+iy$, the antiholomorphic derivative operator is defined as
\begin{equation}
\bar \partial = \frac{\partial}{\partial \bar z}=\frac{1}{2}\left(\frac{\partial}{\partial x}+i \frac{\partial}{\partial y}\right),\quad
\frac{\partial z}{\partial \bar z}=0\,, \quad 
\frac{\partial \bar z}{\partial \bar z}=1\,, \quad  
\frac{\partial |z|}{\partial \bar z}= \frac{1}{2}\frac{z}{|z|}\,.
\end{equation}
The constrained extremum is given by the equation
\begin{equation}
\nabla_{\bar \partial}V_4|_{\mathcal T}=\lambda\nabla_{\bar \partial}f|_{\mathcal T},
\end{equation}
which can be rewritten in terms of $\nabla_{\bar \partial}f^2|_{\mathcal T}$, by rescaling and shifting  the original Lagrange multiplier $\lambda$,  namely
$\lambda=2(\mu +b )$, as
\begin{equation}
\nabla_{\bar \partial}V_4|_{\mathcal T}=(\mu + b)\nabla_{\bar \partial}f^2|_{\mathcal T}.
\end{equation} 
This leads to the equation
\begin{equation}
\label{basic-eigen-equation}
A\triplet{z_1}{z_2}{z_3} = 
\mu \triplet{z_1/|z_1|}{z_2/|z_2|}{z_3/|z_3|}\,,\quad |z_1|+|z_2|+|z_3|=1.
\end{equation}
One can see that, at the points satisfying \eqref{basic-eigen-equation}, 
one has $\mu=\vec{z}^{\,\dagger} A\vec{z}=h(\vec{z}) = V_4(\vec z) - b$. 
Thus, the non-negativity of $V_4$ at these extremum points is equivalent to $\mu + b \ge 0$.

However, the values of $\mu$ just introduced are defined for the points of constrained extrema of $V_4$,
We want to build a function $\varphi$ which is defined everywhere on $T$ and which coincides with $\mu$ at these points.
To build it, let us define the rescaling matrix 
\begin{equation}
R = \begin{pmatrix}
|z_1|^{1/2} & 0 & 0\\
0 & |z_2|^{1/2} & 0\\
0 & 0 & |z_3|^{1/2}
\end{pmatrix}
\end{equation}
and multiply \eqref{basic-eigen-equation} by $R$.
Then Eq.~\eqref{basic-eigen-equation} can be recast in the eigenvalue problem of a new matrix 
$A(x,y,z)=RAR$, where $(x,y,z)=p(\vec{z})$, defined everywhere on $T$:
\begin{equation}
A(x,y,z)\vec w = \mu \vec w\,, \quad
A(x,y,z) = \begin{pmatrix}
x\,\widetilde{a} & \sqrt{xy} \,d & \sqrt{xz}\, \overline{d}\\
\sqrt{xy}\, \overline{d} & y\, \widetilde{a} & \sqrt{yz} \,d\\
\sqrt{xz}\, d & \sqrt{yz}\, \overline{d} & z\, \widetilde{a}
\end{pmatrix}\,,\quad
\vec w = R^{-1} \vec z, 
\end{equation} 
where vector $\vec w=(w_1,w_2,w_3)^t$ is a unit length vector 
satisfying $|w_1|=\sqrt{x},|w_2|=\sqrt{y},|w_3|=\sqrt{z}$. 

The new matrix $A(x,y,z)$ is a hermitean matrix defined everywhere on $T$ including its boundary $\partial T$.
Therefore, one can study its spectrum everywhere in $T$. We define the auxiliary function $\varphi(x,y,z)$
as the minimal eigenvalue (among the three eigenvalues available) at each point in $T$:
\begin{equation}
\label{min-function}
\varphi \colon T\to \mathbb{R},\quad
\varphi(x,y,z)= \min\,Spec\,A(x,y,z).
\end{equation}
Although it is possible to write explicit expressions for the three eigenvalues of a $3\times 3$ matrix,
we will not use them.  
The key observation is that, at the points of constrained extrema of $V_4$,
\begin{equation}
V_4(\vec{z})-b=\mu \geqslant \varphi(|z_1|,|z_2|,|z_3|)\geqslant \underset{(x,y,z) \in T}{\min}\,\varphi(x,y,z)\,.
\end{equation}
We conclude that, for any $\vec{z} \in \mathcal T$,
\begin{equation}
\label{Estimate-inequality}
V_4(\vec{z})\geqslant \underset{\vec{z} \in \mathcal T}{\min}\,V_4(\vec{z}) \geqslant \underset{(x,y,z) \in T}{\min}\,\varphi(x,y,z)+b,
\end{equation} 
which is exactly what we announced in Section~\ref{subsection-idea}. 

With this result, we now turn to the search of the global minimum of function $\varphi(x,y,z)$ instead of dealing with $V_4$.

\subsection{Smooth minima of $\varphi$ inside $T$}
\label{smooth-interior-minima-section}
Analyzing $\varphi$ defined for the most general matrix $A$ and the coefficient $b$ will result in very technical case-dependent analysis. 
Instead we show that it is sufficient to prove
Theorem \ref{Main-theorem} for those functions $V_4$ satisfying \eqref{A4-BFB-1}, \eqref{A4-BFB-2}, which have {\it generic} data
$(A,b)$, see Appendix~\ref{Generic-section} for the relevant definitions and the justification why the problem can be reduced to the generic case.   
So, in the present section we assume that the tuple of coefficients $(A,b)$ determining our $V_4$ is generic with respect to the collection of functions specified in subsection \ref{Collection-subsection}.

At each point in $T$, the matrix $A(x,y,z)$ has three eigenvalues (counted with multiplicity).
Thus, the spectrum of $A(x,y,z)$ defines a triple cover of $T$. Individual branches can intersect,
which means that there may exist points in $T$ where $\varphi(x,y,z)$ is non-differentiable.
Therefore, when looking for the minimum of $\varphi$, we need to analyze not only smooth extrema 
but also the non-differentiable points.

We begin here with smooth critical points lying in the interior of $T$.
Given the matrix $A(x,y,z)$, its eigenvalues $\mu$ can be found as solutions of the characteristic equation: 
$\det(\mu \id - A(x,y,z)) = 0$. The matrix $A(x,y,z)$ depends on the parameters $\tilde a$ and $d$ inherited from 
the original $A$ and on the point $(x,y,z)$ subject to the constraint $x+y+z=1$.
This characteristic equation can be written in the following form:
\begin{equation}
\label{char-equation}
g(\mu;x,y,z):=\mu^3-\frac{\sigma_1(A)}{3}\mu^2+\frac{\sigma_2(A)}{3}(xy+yz+xz)\mu-\sigma_3(A)xyz=0,
\end{equation}
where $\sigma_{1,2,3}(A)$ can be expressed via the already familiar eigenvalues of the matrix $A$:
\begin{equation}
\sigma_1(A)=\Tr\,A=\lambda_1+\lambda_2+\lambda_3,\quad  
\sigma_2(A)=\lambda_1\lambda_2+\lambda_2\lambda_3+\lambda_3\lambda_1,\quad 
\sigma_3(A)=\lambda_1\lambda_2\lambda_3=\det\,A. \nonumber
\end{equation}
The condition \eqref{char-equation} defines the implicit (multivalued) function $\mu$ on $T$.
Choosing $x$ and $y$ as independent variables and substituting $z=1-x-y$, one can compute the total derivatives 
$dg/dx = 0$ and $dg/dy = 0$ and express them via the derivatives
$\mu_x = \partial\mu/\partial x$ and $\mu_y = \partial\mu/\partial y$
of a particular branch of  $\mu$:
\begin{eqnarray}
\label{xy-derivatives}
\frac{d g}{d x} = \frac{\partial g}{\partial \mu} \mu_x + \frac{\partial g}{\partial x} &=&\mu_x \frac{\partial g}{\partial \mu}
+\left(\frac{\sigma_2}{3}\mu-\sigma_3y\right)(z-x)=0\,,\nonumber\\[2mm]
\frac{d g}{d y} = \frac{\partial g}{\partial \mu} \mu_y + \frac{\partial g}{\partial y} &=&\mu_y\frac{\partial g}{\partial \mu}
+\left(\frac{\sigma_2}{3}\mu-\sigma_3x\right)(z-y)=0\,,
\end{eqnarray}
where
\begin{equation}
\frac{\partial g}{\partial \mu} = 3\mu^2-\frac{2\sigma_1}{3}\mu+ \frac{\sigma_2}{3}(xy+xz+yz)\,.\label{dgdmu}
\end{equation}
Vanishing of the derivatives $\mu_x=\mu_y=0$ at a smooth critical point of a given branch implies 
\begin{equation}
\label{main-xy-equations}
\left(\frac{\sigma_2}{3}\mu-\sigma_3 y\right)(z-x)=0\,, \quad 
\left(\frac{\sigma_2}{3}\mu-\sigma_3 x\right)(z-y)=0\,. 
\end{equation}
This system of equations has two sorts of solutions.
\begin{itemize}
	\item 
One is $x=y=z=1/3$, the barycenter of $T$. 
At this point, $A(x,y,z) = A/3$, and the value of $\mu$ is determined by the characteristic equation for $A/3$
whose eigenvalues are $\tilde \lambda_k = \lambda_k/3$.
Let us write this characteristic equation as
\begin{equation}
g(\mu; 1/3, 1/3, 1/3) = (\mu - \tilde \lambda_1)(\mu - \tilde \lambda_2)(\mu - \tilde \lambda_3) = 0\,.\label{gA}
\end{equation}
Then $\partial g/\partial \mu$ at the point $\mu = \tilde \lambda_1$ can be represented as
\begin{equation}
\frac{\partial g}{\partial \mu} = (\tilde \lambda_1 - \tilde \lambda_2)(\tilde \lambda_1 - \tilde \lambda_3)\,.\label{gA2}
\end{equation}
One sees that for a generic matrix $A$, whose eigenvalues do not coincide, $\partial g/\partial \mu \not = 0$.

Differentiating $g=0$ further,
one can observe that the Hessian matrix for $\mu$ at this point has the following form:
\begin{equation}
\mmatrix{\mu_{xx}}{\mu_{xy}}{\mu_{yx}}{\mu_{xx}} = \frac{\mu \sigma_2 - \sigma_3}{3\, \partial g/\partial \mu}
\mmatrix{2}{1}{1}{2}\,.
\end{equation} 
\item
The other solution comes from equating just two variables, for example $x=y\not = z$. 
For a generic $A$ the values of $x,y,z$ forming that solution are well-defined and non-zero:
\begin{equation}
x = y = \frac{\sigma_2 \mu}{3\sigma_3}\,, \quad z= 1-2\frac{\sigma_2 \mu}{3\sigma_3}\,.
\end{equation}
Expressing $\mu=3 \sigma_3 x/\sigma_2$ and substituting it into the characteristic equation \eqref{char-equation},
one obtains the value of $x$:
\begin{equation}
x = \frac{3\sigma_1\sigma_2\sigma_3 - \sigma_2^3}{27\sigma_3^2 - \sigma_2^3}\,,\label{value-of-x}
\end{equation}
 which is, again, well-defined and non-zero for a generic $A$. 
Differentiating $g=0$ again, we obtain at this point
\begin{equation}
0 = \frac{d^2g}{dx^2} = \mu_{xx} \frac{\partial g}{\partial \mu} -2 \left(\frac{\mu \sigma_2}{3} - \sigma_3 y\right) = \mu_{xx} \frac{\partial g}{\partial \mu}\,.
\end{equation}
Direct inspection shows that at this point $\partial g/\partial \mu \not = 0$, as it would attain zero only if $x$ were zero.  
Therefore, $\mu_{xx} = 0$. The same applies to $\mu_{yy}$.
However, the cross derivative is
$\mu_{xy}=\frac{\sigma_3(z-x)}{\partial g/\partial \mu} \not = 0$. 
Thus, the hessian at the considered point is never positive definite, and this critical point is a saddle point, not a minimum.

The same conclusion applies to the other two points, which correspond to $x=z \not = y$ and $y=z \not = x$.
\end{itemize}
We conclude that if $\varphi(x,y,z)$ attains a smooth minimum in the interior of $T$, it can only be at the barycenter $x=y=z=1/3$.
At this point, the three eigenvectors of $A$ given in \eqref{eigenvectors} when divived by $\sqrt{3}$ belong to the fiber over this barycenter
$p^{-1}(1/3,1/3,1/3)$. Therefore, the value $\varphi(1/3,1/3,1/3) = \min \tilde \lambda_k$, $k=1,2,3$.
Thus, if Eqs.~\eqref{A4-BFB-2} are satisfied, $\varphi(1/3,1/3,1/3) + b \ge 0$ and, therefore, by \eqref{chain}, $V_4\ge 0$ on $\mathcal T$.

\subsection{Non-differentiable points of $\varphi$}

The function $\varphi$ defined as the minimum value of the triple cover can, in principle, be non-differentiable at certain points inside $T$.
Let us consider an individual branch of this triple cover. It is represented by the (singular) manifold with boundary, which we denote 
$\mathcal E$ and define as 
$$
\mathcal E=\{(\mu,t)\in \mathbb{R}\times T\,|\,g(\mu,t)=0\},
$$
where the function $g(\mu,t)=g(\mu,x,y,z)$ is defined in \eqref{char-equation}. 
Let $(\mu_0,t_0)\in \mathcal E$ be a point of this manifold. 
The implicit function theorem states that if 
$\frac{\partial g}{\partial \mu}|_{(\mu_0,t_0)}\neq 0$,
then there is an open neighborhood $U$ of $(\mu_0,t_0)$  in $\mathcal E$ such that $U \cong \mathbb{R}$ and
the coordinate $\mu$ of the point $(\mu,t)\in U$ is expressed as a smooth function of the component $t$. 
Therefore, the only obstacle to differentiability of $\mu$ as a function of $t$ at $t_0$
is when $\frac{\partial g}{\partial \mu}|_{(\mu_0,t_0)} = 0$.

This can happen in two distinct situations.
\begin{itemize}
	\item 
\begin{figure}[!h]
	\centering
\begin{tikzpicture}
\draw (-2.5,-1.3) -- (2.5,-1.3);
\draw (-2.5,-1.3) -- (-2.5,1.3);
\draw [black,thick] (0,0) arc (0:-80:2 and 1) (0,0) arc (180:100:2 and 1);
\filldraw [black] (-2.5,0) circle (2pt) (-2.3,0) node[right] {$\mu_0$};
\filldraw [black] (0,-1.3) circle (2pt) (0,-1.5) node[below] {$t_0$};
\filldraw [color=black, fill=white] (0,0) circle (2pt);
\draw (-2.4,1.4) node[right] {$\mu$} (2.4,-1.5) node[below] {$t$};
\end{tikzpicture}
\label{Fig1}
\caption{$\mathcal E$ is smooth at  $(\mu_0,t_0)$ and $\mu(t)$	is not differentiable at $t_0$}
\end{figure}
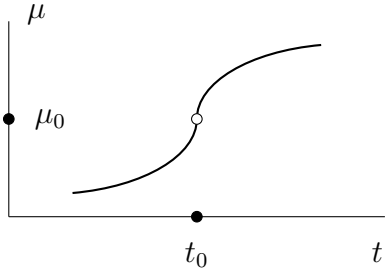
The gradient $\nabla_{\mu,t}g|_{(\mu_0,t_0)}$ whose first component is 
$\frac{\partial g}{\partial \mu}|_{(\mu_0,t_0)}$ is non-zero ``horizontal'' vector.
Then, the variety $\mathcal E$ is smooth at this point, and by virtue of the implicit function theorem, 
$\mu(t)$ takes locally the form of the plot in Fig.~\ref{Fig1}.
Clearly, this point cannot be the extremum of $\varphi$ (actually, one can show that there are no such points in $\mathcal E$ at all). 
	\item
The entire gradient is vanishing at this point. 
This can happen only at the points where at least two branches of the triple cover intersect 
(the case when two branches coincide identically on $T$ can be disregarded for general matrices $A$).
Indeed, following the logic that led us to \eqref{gA2}, we can conclude that $\partial g/\partial \mu = 0$
only when two eigenvalues of $A(x,y,z)$ coincide.
\begin{figure}[!h]
	\centering
	\begin{tikzpicture}
	\draw (-6,-0.8) -- (5.5,-0.8);
	\draw [black,thick] (-2.5,1) arc (-60:-90:5 and 3) (-2.5,1) arc (120:90:5 and 3) (0,1.4) -- (5,1.4);
	\draw [black,thick] (-2.5,1) arc (60:90:5 and 3) (-2.5,1) arc (-120:-90:5 and 3) (2.5,0.2) arc (60:90:5 and 3) (2.5,0.2) arc (-120:-90:5 and 3);
	\draw [black,thick] (-5,-0.2) -- (0,-0.2) (2.5,0.2) arc (-60:-90:5 and 3) (2.5,0.2) arc (120:90:5 and 3);

	\filldraw [color=black, fill=white] (-2.5,1) circle (2pt) (2.5,0.2) circle (2pt);
	\filldraw [black] (2.5,-0.8) circle (2pt);
	\draw (-5.1,1.4) node[left] {$\mu_1$} (-5.1,0.6) node[left] {$\mu_2$} (-5.1,-0.2) node[left] {$\mu_3$} 
	(5.5,-1) node[below] {$t$} (2.5,-1) node[below] {$t_0$};
	\end{tikzpicture}
	\label{Fig2}
	\caption{$\varphi$ is not differentiable at $t_0$, which is the intersection point of two smooth branches}
\end{figure}
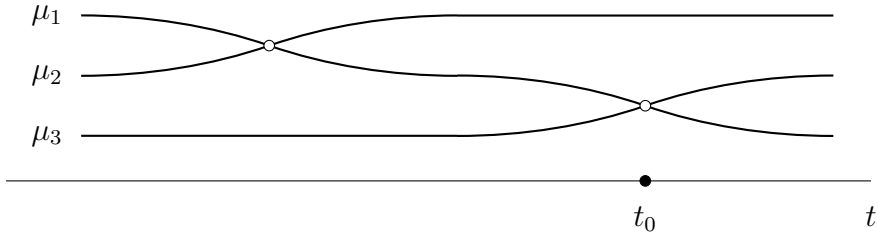
If the intersection takes place in the form as shown in Fig.~\ref{Fig2},
then again it cannot be the global minimum.
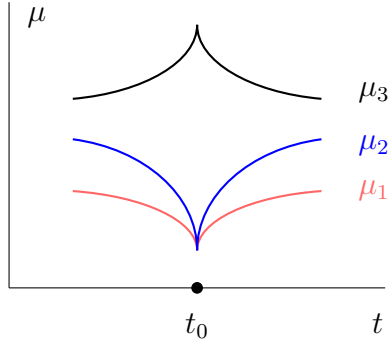
\begin{figure}[!h]
	\centering
	\begin{tikzpicture}
	\draw (-2.5,-1.5) -- (2.5,-1.5);
	\draw (-2.5,-1.5) -- (-2.5,2.3);
	\draw [black,thick] (0,2) arc (0:-80:2 and 1) (0,2) arc (-180:-100:2 and 1);
	\draw [red!60,thick] (0,-1) arc (0:80:2 and 0.8) (0,-1) arc (180:100:2 and 0.8);
	\draw [blue,thick] (0,-1) arc (0:80:2 and 1.5) (0,-1) arc (180:100:2 and 1.5);
	\filldraw [black] (0,-1.5) circle (2pt) (0,-1.7) node[below] {$t_0$};
	\draw (-2.4,2.1) node[right] {$\mu$} (2.4,-1.7) node[below] {$t$} (2, 1.1) node[right] {$\mu_3$};
	\draw[blue] (2, 0.4) node[right] {$\mu_2$};
	\draw[red!60] (2, -0.2) node[right] {$\mu_1$};
	\end{tikzpicture}
	\label{Fig3}
	\caption{$\varphi$ is not differentiable at the common cusp point of $\mu_{1,2}$}
\end{figure}
The only arrangement for such a point to be the global minimum is when two branches not only coincide but also both have downward cusps
at this point, Fig.~\ref{Fig3}. 
However, since $\mu_1+\mu_2+\mu_3 = \sigma_1/3 =\mathrm{const}$, it implies that the \textit{third}
branch must possess a compensating upward cusp. But if it has a cusp, it must, at this point, meet yet another eigenvalue.
Since we have only three eigenvalues at our disposal, we conclude that this arrangement is  impossible. 
\end{itemize}
These geometric arguments can be recast in the strict differential language which confirms the conclusion
that $\varphi$ cannot attain the global minimum at the interior points of $T$ where it is non-differentiable.

\subsection{The function $\varphi$ on the boundary of $T$}

Let us now restrict the function $\varphi(x,y,z)$ to the boundary of $T$.
For definiteness, we consider the side $z=0$. Then, we deal with one-variable problem, with $x$
being independent variable and $y=1-x$.
The characteristic equation \eqref{char-equation} takes the form
\begin{equation}
\mu^3-\frac{\sigma_1}{3}\mu^2+\frac{\sigma_2}{3}x(1-x)\mu=\mu\left[\mu^2-\frac{\sigma_1}{3}\mu+\frac{\sigma_2}{3}x(1-x)\right]=0.
\label{edge-equation}
\end{equation}
One of the branches of our multivalued functional solution of the above equation is $\mu = 0$, which is constant  along the boundary of $T$.
In fact, as can be seen from the original characteristic equation \eqref{char-equation},
for a generic matrix $A$, the spectrum of $A(x,y,z)$ contains zero only at the boundary $xyz=0$.

Below, by ``minimum'' we always mean ``the absolute minimum''. 
The case when $\mu=0$ delivers the minimum of $\varphi$ on $T$ will be covered in the next subsection; 
now we assume that if the  minimum of 
$\varphi$ is attained at the boundary, it is nonzero. Moreover, we
assume that locally around the  point of the minimum $\varphi$ is equal to the minimum
of the  two branches determined by
the remaining quadratic factor in  \eqref{edge-equation}. 

Applying the differential analysis of critical and non-differentiable points just as above, we see that
the minimum of $\varphi$ can be attained either at the side barycenter $x=y=1/2$
or at the vertices.
In the case when the minimum is attained at the side barycenter, it is equal to
$$
\varphi(1/2,1/2,0)=\min \,Spec\, A(1/2,1/2,0)=\min\,\{0, (\tilde a-|d|)/2,(\tilde a+|d|)/2\}=(\tilde a-|d|)/2.
$$
Thus, if Eqs.~\eqref{A4-BFB-1} are satisfied, 
$\varphi(1/2,1/2,0) + b \ge 0$. If the minimum is attained at $(1,0,0)$, it is equal to
$$
\varphi(1,0,0)=\min \,Spec\, A(1,0,0)=\min\,\{0, \tilde a\}=\tilde a,
$$
and again by Eqs.~\eqref{A4-BFB-1} we have 
$\varphi(1,0,0) + b \ge 0$. In either case we get by \eqref{chain} that $V_4 \ge 0$ on $\mathcal T$.

\subsection{The case $\min\,\varphi = 0$}

The assumption $\underset{(x,y,z)\in T}{\min}\,\varphi(x,y,z)=0$ implies that the spectrum of $A(x,y,z)$ for all $(x,y,z) \in T$ is non-negative, in particular the hermitean form $h(\vec{z})$ with the matrix 
$A=3A(1/3,1/3,1/3)$ is non-negative. 

We can equivalently reformulate inequalities \eqref{A4-BFB-2} 
in terms of the eigenvalues $\lambda_k$ of $A$ as
\begin{equation}
\frac{1}{3}\left(\underset{k=1,2,3}{\min}\ \,\lambda_k\right)+b\ge 0.
\end{equation}
The form
$h(\vec{z})$ attains its minimum on the sphere $||\vec z||^2=1/3$ at an 
eigendirection which, by \eqref{eigenvectors}, belongs to the fiber
 $p^{-1}(1/3,1/3,1/3) \subset \mathcal T$. The entire fiber $p^{-1}(1/3,1/3,1/3)$ is contained in the sphere $||\vec z||^2=1/3$. It is an easy check that this fiber consists
of the shortest vectors in $\mathcal T$, that is, all vectors $\vec{z}\in \mathcal T$ satisfy
$||\vec{z}||^2\geqslant 1/3$. Consequently, for our non-negative $h(\vec{z})$ we have 
$$
V_4(\vec{z})=h(\vec{z})+b=
3||\vec z||^2 h\left(\frac{\vec z}{\sqrt{3}||\vec z||}\right)+b\ge  h\left(\frac{\vec z}{\sqrt{3}||\vec z||}\right)+b
\ge$$
\begin{equation}
\ge  \underset{\vec{z}_0\in p^{-1}(1/3,1/3,1/3)}{\min}\,h(\vec{z}_0)+b=\frac{1}{3}\left(\underset{k=1,2,3}{\min}\ \,\lambda_k\right)+b\ge 0,
\end{equation}
for all $\vec{z}\in \mathcal T$. 
Non-negativity of $h(\vec z)$ is used in the first inequality in this chain.

\section{Discussion and conclusions}

\subsection{The synthesis of the arguments}

Let us summarize the arguments and calculations of the previous sections.
\begin{itemize}
\item Necessity. Imposing conditions \eqref{A4-BFB-1} and \eqref{A4-BFB-2} is necessary
	to ensure $V_4 \ge 0$ because we explicitly found the rays in $\CC^3$ where
	$V_4$ is proportional to each of the expressions involved in these inequalities.
	Thus, the most laborious task is to prove the sufficiency of \eqref{A4-BFB-1} and \eqref{A4-BFB-2} for $V_4 \ge 0$.
\item
	Knowing that $V_4$ is a homogeneous function, we restricted it to the compact space ${\mathcal T}$ defined by 
	$|z_1|+|z_2|+|z_3| = 1$. Instead of analyzing $V_4$ on ${\mathcal T}$, we introduced an auxiliary function
	$\varphi$ defined on the triangular region $T \subset \RR^3$ of points $(x,y,z)$ satisfying $x,y,z \ge 0$ and $x+y+z=1$.
	This function has the property that $V_4 \ge \varphi + b$ at any point where $V_4$ has an extremum.
	In other words, the global minimum of $V_4$ in ${\mathcal T}$ is bounded from below by the global minimum of $\varphi+b$ on $T$.
	Thus, we switched our attention to the minima of $\varphi$.
\item
	Since $\varphi$ depends on two independent variables, it is easier to analyze than $V_4$ in ${\mathcal T}$.
	We found that non-zero minima of $\varphi$ can be located only at the barycenters of $T$ \eqref{barycenters}.
	If the conditions \eqref{A4-BFB-1} and \eqref{A4-BFB-2} are imposed, then $\varphi + b \ge 0$ at the barycenters
	and, consequently, $V_4 \ge 0$ everywhere in ${\mathcal T}$.
\end{itemize}

\subsection{Physics reformulation}

Three-Higgs-doublet models make use of three Higgs doublets $\phi_i$, $i = 1,2,3$, each $\phi_i \in \CC^2$. 
The Higgs potential of the $A_4$-invariant 3HDM 
\cite{Ma:2001dn,Ishimori:2010au,Toorop:2010ex,Ivanov:2012fp}
is traditionally written as
\begin{eqnarray}
V &=& \mu^2(\phi_1^\dagger\phi_1 + \phi_2^\dagger\phi_2 + \phi_3^\dagger\phi_3) + 
\lambda_{1}(\phi_1^\dagger\phi_1 + \phi_2^\dagger\phi_2 + \phi_3^\dagger\phi_3)^2 \nonumber\\[2mm]
&& + \lambda_{3}\left[(\phi_1^\dagger\phi_1)(\phi_2^\dagger\phi_2) + (\phi_1^\dagger\phi_1)(\phi_3^\dagger\phi_3) + (\phi_2^\dagger\phi_2)(\phi_3^\dagger\phi_3)\right] + \lambda_{4} (|\phi_1^\dagger\phi_2|^2 + |\phi_2^\dagger\phi_3|^2 + |\phi_3^\dagger\phi_1|^2)\nonumber\\[2mm]
&& + \frac{\lambda_{5} }{2} \left\{e^{i\epsilon}\left[(\phi_1^\dagger\phi_2)^2 + (\phi_2^\dagger\phi_3)^2 + 
(\phi_3^\dagger\phi_1)^2 \right] + H.c.\right\}\,.\label{VA4-1}
\end{eqnarray}
One usually assumes that the minimum of this potential corresponds to a neutral vacuum, with the vacuum expectation values 
of the three doublets being proportional to each other as vectors in $\CC^2$.
This allows one to focus attention only to the ``neutral subspace'' of the space of three doublets. 
Technically, it is done by replacing $\phi_i$ with complex numbers, so that the terms with $\lambda_3$ and $\lambda_4$
can be combined in a single term with the coefficient $\lambda_3 + \lambda_4$.
Finally, by defining $z_i = \phi_i^2$, one rewrites the quartic part of the potential 
as a quadratic form in terms of $z_i$ and their conjugates \eqref{V4-1}.
The relation between the coefficients is
\begin{equation}
a = \lambda_1\,, \quad 2b = 2\lambda_1 + \lambda_3 + \lambda_4\,, \quad 2d = \lambda_5 e^{i\epsilon}\,.
\end{equation}
The inequalities \eqref{A4-BFB-1} and \eqref{A4-BFB-2} then become
\begin{equation}
\lambda_1 \ge 0\,, \quad 4\lambda_1 + \lambda_3 + \lambda_4 - \lambda_5 \ge 0\,, \quad
3\lambda_1 + \lambda_3 + \lambda_4 + \lambda_5 \cos\left(\epsilon +\frac{2\pi k}{3}\right) \ge 0\,,\label{BFB-physics}
\end{equation}
just as stated in \cite{Ivanov:2020jra}.
In this work we proved that these conditions are necessary and sufficient BFB conditions in the neutral Higgs subspace.

In the light of the results of \cite{Faro:2019vcd,Ivanov:2020jra}, a word of caution is in order.
As demonstrated in \cite{Faro:2019vcd}, a multi-Higgs potential can be bounded in the neutral subspace but unbounded from below
along some charge-breaking directions, even if it has a valid neutral (local) minimum, with all the charged Higgs masses squared positive.
Since we have not analyzed the full space of Higgs doublets, we still cannot claim the conditions \eqref{BFB-physics}
are sufficient in all situations with neutral vacuum.
In particular, they could easily become insufficient in softly broken $A_4$ symmetric model. 
To close this gap, one would need to explore the quartic potential in the entire orbit space, including the charge-breaking directions,
similarly to what was done in \cite{Ivanov:2020jra} for the $S_4$-symmetric 3HDM.

Finally, we remark that the auxiliary function method described in this paper may also be suitable
for the less symmetric versions of 3HDM such as Weinberg's and Branco's models with the symmetry group $\Z_2 \times \Z_2$.
However, this passage is not trivial and requires further work.

\subsection{Conclusions}

In summary, we proved in this paper that the inequalities \eqref{BFB-physics} conjectured in \cite{Ivanov:2020jra}
indeed represent the necessary and sufficient bounded-from-below conditions for the $A_4$ symmetric 3HDM,
at least within the neutral Higgs space.
We could achieve this result with the aid of a novel technique, in which we used an auxiliary function $\varphi$
defined in \eqref{min-function} and related to the Higgs potential.
We believe that this technique can be extended to the more challenging case of the original Weinberg's 3HDM model,
where the exact BFB conditions are still unknown.

\subsection*{Acknowledgements}
The work was initiated in July-August 2020 during the Workshop of Mathematical center in Akademgorodok, held under agreement No. 075-15-2019-1675 with the Ministry of Science and Higher Education of the Russian Federation.
We are grateful to the organizers of this workshop for the stimulating atmosphere. 
The initial steps towards the proof reported here,
including numerical verification of the validity of the conjectured inequalities, were done
in collaboration with Valery Churkin, Daria Lytkina,  
Ivan Antipov, Yury Efremenko, Igor Novikov, Arseny Sadovnikov, Kirill Starodubets, and Alexandr Chumakov. 
We also thank Prof. Valery Churkin for suggesting the idea of using the auxiliary function $\varphi$.

\appendix

\section{Generic and non-generic points in matrix space}\label{appendix-generic-matrices}
\label{Generic-section}
\subsection{The parameter space for functions $V_4$}
The functions $V_4$ are encoded by the tuples $(A,b)$ of their 
coefficients, $A$ is the matrix of the hermitean part, $b$ is the coefficient
of the remaining non-hermitean part. Given the specific form of the matrices $A$,  the space of all such tuples can be identified with 
$\mathbb{R}^4$. We introduce a particular subset $\mathcal H \subset \mathbb{R}^4$,
\begin{equation}
\mathcal H=\{(A,b) \in \mathbb{R}^4\,|\,(A,b) \mbox{ satisfies Eqs.~\eqref{A4-BFB-1}, \eqref{A4-BFB-2} }\,\}.
\end{equation}
Let $V_{4,p}$ denote the function $V_4$ with the datum $p=(A,b)$. 
Thus Theorem \ref{Main-theorem} states that $V_{4,p}\ge 0$ on $\mathcal T$ if and only if $p \in \mathcal H$. The subset $\mathcal H$ admits 
a simple and useful geometric description, formulated in the following lemma. 
\begin{lemma}
\label{cone-lemma}
The subset $\mathcal H$ is a convex cone in $\mathbb{R}^4$,
that is, for every $p,q\in \mathcal H$, $\lambda,\mu \in \mathbb{R}_{\ge 0}$ we have $\lambda p+ \mu q \in \mathcal H$. This cone has non-empty
interior, which is dense in $\mathcal H$. 
\end{lemma}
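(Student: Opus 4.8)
The plan is to read the geometry of $\mathcal H$ directly off its defining inequalities, using coordinates $(a,b,d_1,d_2)$ on $\mathbb{R}^4$ with $d=d_1+id_2$. First I would observe that almost every constraint is linear and homogeneous: $a\ge 0$ is a half-space through the origin, and, writing $2|d|\cos(\gamma+2\pi k/3)=2\,\mathrm{Re}(d\,\omega^k)=2d_1\cos(2\pi k/3)-2d_2\sin(2\pi k/3)$, each condition \eqref{A4-BFB-2} becomes
\[
a+2b+2d_1\cos\tfrac{2\pi k}{3}-2d_2\sin\tfrac{2\pi k}{3}\ge 0,\qquad k=1,2,3,
\]
again a half-space through the origin. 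These four half-spaces intersect in a polyhedral convex cone.

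The only non-linear constraint is the second inequality of \eqref{A4-BFB-1}, namely $a+b-|d|\ge 0$. Since $(d_1,d_2)\mapsto|d|=\sqrt{d_1^2+d_2^2}$ is convex and positively homogeneous, the function $|d|-a-b$ is convex and homogeneous, so its zero-sublevel set $\{|d|\le a+b\}$ is a convex cone (equivalently, the preimage of a Lorentz cone under the linear map $(a,b,d_1,d_2)\mapsto(a+b,d_1,d_2)$). Hence $\mathcal H$ is a finite intersection of convex cones, and is therefore itself a convex cone. To match the exact wording of the lemma I would then verify the property directly: for $p,q\in\mathcal H$ and $\lambda,\mu\ge 0$, every linear inequality survives a nonnegative combination automatically, while for the norm constraint the triangle inequality gives $|\lambda d^{(p)}+\mu d^{(q)}|\le\lambda|d^{(p)}|+\mu|d^{(q)}|\le\lambda(a_p+b_p)+\mu(a_q+b_q)$, which is precisely the constraint for $\lambda p+\mu q$.

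For non-emptiness of the interior I would exhibit a single strictly feasible point. The choice $d=0$ with $a,b>0$, say $(a,b,d_1,d_2)=(2,1,0,0)$, makes all five defining functions strictly positive; by continuity a small ball around it still satisfies every inequality, so it lies in $\mathrm{int}\,\mathcal H$, which is thus non-empty (and $\mathcal H$ is full-dimensional).

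Finally, density of the interior in $\mathcal H$ I would deduce from the standard convexity fact that, for a convex set $C$ with $x_0\in\mathrm{int}\,C$ and $x\in\overline{C}$, the point $(1-t)x+t x_0$ lies in $\mathrm{int}\,C$ for every $t\in(0,1]$. Since $\mathcal H$ is closed (an intersection of closed sets, each the non-strict sublevel set of a continuous function), one has $\mathcal H=\overline{\mathcal H}$; applying the fact with the interior point above and letting $t\to 0^+$ shows that every point of $\mathcal H$ is a limit of interior points. I expect the only real subtlety to be recognizing the nonlinear constraint $a+b\ge|d|$ as a convex (second-order) cone rather than an obstacle requiring case analysis; once that is in hand, the cone property, convexity, and non-empty interior are immediate, and density is the routine segment argument.
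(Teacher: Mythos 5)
Your proof is correct, but it establishes convexity by a genuinely different route than the paper. The paper does not touch the defining inequalities at all: it invokes Proposition~\ref{barycenter-proposition} to identify $\mathcal H$ with the set of parameters $p$ for which $V_{4,p}\ge 0$ on the barycenter fibers, and then uses the linearity of $p\mapsto V_{4,p}$ to conclude that a nonnegative combination $\lambda V_{4,p}+\mu V_{4,q}$ is again nonnegative there. This treats all five conditions uniformly and requires no inspection of their algebraic form, but it leans on the (already proved) equivalence of Proposition~\ref{barycenter-proposition}. You instead work directly with \eqref{A4-BFB-1} and \eqref{A4-BFB-2}, observing that four of the constraints are homogeneous linear half-spaces and that the fifth, $a+b-|d|\ge 0$, is a second-order (Lorentz) cone constraint; the triangle inequality then gives the cone property by hand. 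Your argument is more elementary and self-contained, and it yields the extra observation that $\mathcal H$ is closed, which you then use to make the density step cleaner (the paper's density argument is the same segment argument but stated more tersely). For the non-empty interior the two proofs coincide in substance --- the paper posits a point satisfying the strict inequalities and appeals to continuity, while you exhibit the explicit point $(a,b,d_1,d_2)=(2,1,0,0)$, which indeed makes all five expressions strictly positive. Both approaches are complete; yours trades the paper's appeal to Proposition~\ref{barycenter-proposition} for a small amount of explicit convex geometry.
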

\begin{proof}
Let $p,q$ be any points in $\mathcal H$. 
By Proposition \ref{barycenter-proposition}, the functions
$V_{4,p}$ and $V_{4,q}$, determined by the tuples $p$ and $q$
are non-negative on the fibers over barycenters of $T$. Then,
for any $\lambda,\mu \in \mathbb{R}_{\ge 0}$ the function
$V_{4,\lambda p +\mu q}=\lambda V_{4,p}+\mu V_{4,q}$
is also non-negative over the barycenters,  so that, again by
Proposition  \ref{barycenter-proposition}, $\lambda p +\mu q \in \mathcal H$. 

The statement that $\mathcal H$ has non-empty interior means that
$\mathcal H$ contains a non-empty open subset of $\mathbb{R}^4$. Let us choose
$p_0=(A,b)\in \mathbb{R}^4$ such that $V_{4,p_0}$ satisfies strict versions
of inequalities  \eqref{A4-BFB-1}, \eqref{A4-BFB-2}. As $V_{4,p}$
depends continuously on $p$, one may vary $p$ in a small enough open neighborhood $U \subset \mathbb{R}^4$ of $p_0$, 
so as to keep $V_{4,p}$ satisfy \eqref{A4-BFB-1}, \eqref{A4-BFB-2} for all $p \in U$. Then $U\subset \mathcal H$.

The statement about density of the interior of $\mathcal H$ now follows from
non-emptiness of the interior and convexity of $\mathcal H$ 
(in short, we can always move from a point on the boundary of $\mathcal H$ to an interior point of $\mathcal H$ along a segment in 
$\mathcal H$).
\end{proof}

\subsection{Generic and non-generic points}
Assume now that we have a finite collection $f_1,\dots,f_n$ of 
not identically zero polynomial or real-analytic functions on $\mathbb{R}^4$.
Each of the equations $f_i=0$ determines a real-analytic subvariety (perhaps, of codimension $> 1$ or even empty)
 $P_i\subset \mathbb{R}^4$. The points of $\mathbb{R}^4\setminus \underset{i=1}{\overset{n}{\bigcup}}P_i$ are called {\it generic} (with respect to the collection $f_1,\dots,f_n$). The points of the union $\underset{i=1}{\overset{n}{\bigcup}}P_i$ are called {\it non-generic}
(with respect to $f_1,\dots,f_n$).

The following proposition allows us to reduce proving Theorem \ref{Main-theorem} for $V_{4,p}$ with $p\in \mathcal H$ to proving it
only for generic $p \in \mathcal H$.
\begin{proposition} If $V_{4,p}\ge 0$ on $\mathcal T$ for every generic 
$p\in \mathcal H$, then $V_{4,p} \ge 0$ for every $p \in \mathcal H$. 
\end{proposition}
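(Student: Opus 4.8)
The plan is to combine a density argument in the parameter space $\mathbb{R}^4$ with the continuity of $V_{4,p}$ in $p$. First I would introduce the function
\[
m(p) = \min_{\vec z \in \mathcal T} V_{4,p}(\vec z),
\]
which is well-defined because $\mathcal T$ is compact. Since $(p,\vec z) \mapsto V_{4,p}(\vec z)$ is continuous (in fact linear in $p$ for each fixed $\vec z$, and polynomial in $\vec z$) and $\mathcal T$ is compact, $m$ is a continuous function of $p$. The statement ``$V_{4,p} \ge 0$ on $\mathcal T$'' is equivalent to $m(p) \ge 0$, so the hypothesis reads $m(p) \ge 0$ at every generic $p \in \mathcal H$, and the goal is to upgrade this to all of $\mathcal H$ by passing to the limit.

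The key step is to show that the generic points of $\mathcal H$ are dense in $\mathcal H$. I would first record that the non-generic set $\bigcup_{i=1}^n P_i$ is closed (a finite union of zero-sets of continuous functions) and has empty interior: each $P_i = \{f_i = 0\}$ has empty interior because a real-analytic function vanishing on an open ball vanishes identically, contradicting $f_i \not\equiv 0$; and a finite union of closed sets with empty interior again has empty interior (if an open $U$ were contained in $C_1 \cup C_2$, then $U \setminus C_1$ would be a nonempty open subset of $C_2$, a contradiction, and one induces on the number of sets). Hence the generic set $G = \mathbb{R}^4 \setminus \bigcup_i P_i$ is open and dense in $\mathbb{R}^4$.

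Next I would invoke Lemma~\ref{cone-lemma}. Because $\mathcal H$ has non-empty interior and $G$ is dense in $\mathbb{R}^4$, the intersection $G \cap \mathrm{int}(\mathcal H)$ is dense in the open set $\mathrm{int}(\mathcal H)$; since the same lemma asserts that $\mathrm{int}(\mathcal H)$ is dense in $\mathcal H$, transitivity of density yields that $G \cap \mathrm{int}(\mathcal H) \subseteq G \cap \mathcal H$ is dense in $\mathcal H$. Thus every $p \in \mathcal H$ is a limit of generic points $p_k \in \mathcal H$. By hypothesis $m(p_k) \ge 0$ for all $k$, and by continuity of $m$ we obtain $m(p) = \lim_k m(p_k) \ge 0$, that is, $V_{4,p} \ge 0$ on $\mathcal T$, which is the desired conclusion.

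I expect the main obstacle to be not any single step in isolation---each is fairly standard---but the bookkeeping ensuring that the generic points produced actually lie \emph{inside} $\mathcal H$ rather than merely near it. This is precisely where Lemma~\ref{cone-lemma} is indispensable: without the non-emptiness and density of the interior of $\mathcal H$, the dense set $G$ might meet $\mathcal H$ only along its (possibly lower-dimensional) boundary, and the limiting argument would collapse. A secondary point demanding care is the joint continuity/compactness justification that $m$ is continuous, which is what licenses passing to the limit $m(p) = \lim_k m(p_k)$.
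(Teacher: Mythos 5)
Your proof is correct and follows essentially the same strategy as the paper: both reduce the claim to the density of generic points inside $\mathcal H$ (using Lemma~\ref{cone-lemma} for the non-empty, dense interior) and then pass to the limit using the continuous (indeed linear) dependence of $V_{4,p}$ on $p$ over the compact set $\mathcal T$. The only difference is in implementation --- the paper produces the approximating generic points along a segment $(p_0,p_1]\subset\mathcal H$ reaching into the interior, while you show the generic set is open and dense in $\mathbb{R}^4$ and intersect it with the interior of $\mathcal H$ --- and both routes are valid.
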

\begin{proof}
Given a $p_0\in \mathcal H$, by density of the interior
and convexity of $\mathcal H$ (see  Lemma \ref{cone-lemma}) 
we can choose $p_1$ in the interior of $\mathcal H$  
 determining a segment $[p_0,p_1]\subset \mathcal H$, such that 
all points of $(p_0,p_1]$ are generic. 
Then by our assumption 
$V_{4,q}\ge 0$ on $\mathcal T$ for all $q\in (p_0,p_1]$,
so that $$V_{4,p_0}=\underset{\begin{array}{c}q\to p_0\\ q\in(p_0,p_1] \end{array}}{\lim}\, V_{4,q}\ge 0$$
on $\mathcal T$. 
\end{proof}

\subsection{The collection of functions on $\mathbb{R}^4$}
\label{Collection-subsection}
We define our collection as the union of two parts: the part of real-analytic functions of the off-diagonal entry $d$ of $A$
\begin{equation}|d|, \quad arg\, d-\frac{\pi k}{3},\quad k=0,1,2,3,4,5,
\end{equation}
and the part of polynomial functions
\begin{equation}\sigma_2(A), \,\,\sigma_3(A), \,\,
 3\sigma_1(A) \sigma_3(A)-\sigma_2^2(A),\,\,
27\sigma_3^2(A)-\sigma_2^3(A)\end{equation}
where $\det (t{\mathbf 1}-A)=t^3-\sigma_1(A)t^2+\sigma_2(A)t-\sigma_3(A)$ 
(in fact, these functions depend only on $A$-component of $(A,b) \in \mathbb{R}^4$). 
The check that the listed functions are not identically zero on $\mathcal H$ is straightforward but we omit it here. 

Looking at formulas \eqref{spectrum} we note that two eigenvalues
of $A$ coincide only when $|d|=0$ or one of $arg\,d-\frac{\pi k}{3}$ vanishes, 
for that $A$.  
Thus, the union of subvarieties $P_k \subset \mathbb{R}^4$ determined by 
respective equations $|d|=0$, $arg\,d-\frac{\pi k}{3}=0$ contains all $(A,b) \in \mathbb{R}^4$
with the matrix $A$ having a multiple eigenvalue, so the generic, 
with respect to that part of our collection, points $(A,b)$ are those with
$A$ having simple spectrum.

The generic, with respect to the second part of our collection, points $(A,b)$ are those, for which the expressions for the critical points of $\varphi$, appearing in subsection \ref{smooth-interior-minima-section}, and analogous expressions in further subsections are well-defined.


\begin{thebibliography}{99}

\bibitem{Aad:2012gk}
G.~Aad {\it et al.}  [ATLAS Collaboration],
Phys.\ Lett.\ B \textbf{716}, 1 (2012)
[arXiv:1207.7214 [hep-ex]].

\bibitem{Chatrchyan:2012gu}
S.~Chatrchyan \textit{et al.}  [CMS Collaboration],
Phys.\ Lett.\ B \textbf{716}, 30 (2012)
[arXiv:1207.7235 [hep-ex]].


\bibitem{Khachatryan:2016vau}
G.~Aad \textit{et al.} [ATLAS and CMS],
JHEP \textbf{08}, 045 (2016)
doi:10.1007/JHEP08(2016)045
[arXiv:1606.02266 [hep-ex]].

\bibitem{Englert:1964et}
F.~Englert and R.~Brout,
Phys. Rev. Lett. \textbf{13}, 321-323 (1964)
doi:10.1103/PhysRevLett.13.321

\bibitem{Higgs:1964ia}
P.~W.~Higgs,
Phys. Lett. \textbf{12}, 132-133 (1964)
doi:10.1016/0031-9163(64)91136-9

\bibitem{Higgs:1964pj}
P.~W.~Higgs,
Phys. Rev. Lett. \textbf{13}, 508-509 (1964)
doi:10.1103/PhysRevLett.13.508

\bibitem{Guralnik:1964eu}
G.~S.~Guralnik, C.~R.~Hagen and T.~W.~B.~Kibble,
Phys. Rev. Lett. \textbf{13}, 585-587 (1964)
doi:10.1103/PhysRevLett.13.585

\bibitem{Lee:1973iz}
T.~D.~Lee,
Phys. Rev. D \textbf{8}, 1226-1239 (1973)
doi:10.1103/PhysRevD.8.1226

\bibitem{Weinberg:1976hu}
S.~Weinberg,
Phys. Rev. Lett. \textbf{37}, 657 (1976)
doi:10.1103/PhysRevLett.37.657

\bibitem{Branco:2011iw}
G.~C.~Branco, P.~M.~Ferreira, L.~Lavoura, M.~N.~Rebelo, M.~Sher and J.~P.~Silva,
Phys. Rept. \textbf{516}, 1-102 (2012)
doi:10.1016/j.physrep.2012.02.002
[arXiv:1106.0034 [hep-ph]].

\bibitem{Ivanov:2017dad}
I.~P.~Ivanov,
Prog. Part. Nucl. Phys. \textbf{95}, 160-208 (2017)
doi:10.1016/j.ppnp.2017.03.001
[arXiv:1702.03776 [hep-ph]].

\bibitem{Ivanov:2006yq}
I.~P.~Ivanov,
Phys. Rev. D \textbf{75}, 035001 (2007)
[erratum: Phys. Rev. D \textbf{76}, 039902 (2007)]
doi:10.1103/PhysRevD.75.035001
[arXiv:hep-ph/0609018 [hep-ph]].

\bibitem{Faro:2019vcd}
F.~S.~Faro and I.~P.~Ivanov,
Phys. Rev. D \textbf{100}, no.3, 035038 (2019)
doi:10.1103/PhysRevD.100.035038
[arXiv:1907.01963 [hep-ph]].

\bibitem{Kannike:2012pe}
K.~Kannike,
Eur. Phys. J. C \textbf{72}, 2093 (2012)
doi:10.1140/epjc/s10052-012-2093-z
[arXiv:1205.3781 [hep-ph]].

\bibitem{Kannike:2016fmd}
K.~Kannike,
Eur. Phys. J. C \textbf{76}, no.6, 324 (2016)
[erratum: Eur. Phys. J. C \textbf{78}, no.5, 355 (2018)]
doi:10.1140/epjc/s10052-016-4160-3
[arXiv:1603.02680 [hep-ph]].

\bibitem{Ivanov:2020jra}
I.~P.~Ivanov and F.~Vaz\~ao,
JHEP \textbf{11}, 104 (2020)
doi:10.1007/JHEP11(2020)104
[arXiv:2006.00036 [hep-ph]].

\bibitem{Ma:2001dn}
E.~Ma and G.~Rajasekaran,
Phys. Rev. D \textbf{64}, 113012 (2001)
doi:10.1103/PhysRevD.64.113012
[arXiv:hep-ph/0106291 [hep-ph]].

\bibitem{Lavoura:2007dw}
L.~Lavoura and H.~Kuhbock,
Eur. Phys. J. C \textbf{55}, 303-308 (2008)
doi:10.1140/epjc/s10052-008-0562-1
[arXiv:0711.0670 [hep-ph]].

\bibitem{Morisi:2009sc}
S.~Morisi and E.~Peinado,
Phys. Rev. D \textbf{80}, 113011 (2009)
doi:10.1103/PhysRevD.80.113011
[arXiv:0910.4389 [hep-ph]].

\bibitem{Machado:2010uc}
A.~C.~B.~Machado, J.~C.~Montero and V.~Pleitez,
Phys. Lett. B \textbf{697}, 318-322 (2011)
doi:10.1016/j.physletb.2011.02.015
[arXiv:1011.5855 [hep-ph]].

\bibitem{Ishimori:2010au}
H.~Ishimori, T.~Kobayashi, H.~Ohki, Y.~Shimizu, H.~Okada and M.~Tanimoto,
Prog. Theor. Phys. Suppl. \textbf{183}, 1-163 (2010)
doi:10.1143/PTPS.183.1
[arXiv:1003.3552 [hep-th]].

\bibitem{Toorop:2010ex}
R.~de Adelhart Toorop, F.~Bazzocchi, L.~Merlo and A.~Paris,
JHEP \textbf{03}, 035 (2011)
[erratum: JHEP \textbf{01}, 098 (2013)]
doi:10.1007/JHEP03(2011)035
[arXiv:1012.1791 [hep-ph]].

\bibitem{Boucenna:2011tj}
M.~S.~Boucenna, M.~Hirsch, S.~Morisi, E.~Peinado, M.~Taoso and J.~W.~F.~Valle,
JHEP \textbf{05}, 037 (2011)
doi:10.1007/JHEP05(2011)037
[arXiv:1101.2874 [hep-ph]].

\bibitem{Ivanov:2012fp}
I.~P.~Ivanov and E.~Vdovin,
Eur. Phys. J. C \textbf{73}, no.2, 2309 (2013)
doi:10.1140/epjc/s10052-013-2309-x
[arXiv:1210.6553 [hep-ph]].

\bibitem{Pramanick:2017wry}
S.~Pramanick and A.~Raychaudhuri,
JHEP \textbf{01}, 011 (2018)
doi:10.1007/JHEP01(2018)011
[arXiv:1710.04433 [hep-ph]].

\bibitem{Chakrabarty:2018yoy}
N.~Chakrabarty and I.~Chakraborty,
Int. J. Mod. Phys. A \textbf{34}, no.05, 1950025 (2019)
doi:10.1142/S0217751X19500258
[arXiv:1801.05272 [hep-ph]].

\bibitem{Degee:2012sk}
A.~Degee, I.~P.~Ivanov and V.~Keus,
JHEP \textbf{02}, 125 (2013)
doi:10.1007/JHEP02(2013)125
[arXiv:1211.4989 [hep-ph]].

\bibitem{Branco:1980sz}
G.~C.~Branco,
Phys. Rev. D \textbf{22}, 2901 (1980)
doi:10.1103/PhysRevD.22.2901

\bibitem{MCAworkshop}
I. F. Antipov, Yu. D. Efremenko, I. A. Novikov, A. V. Sadovnikov, K. A. Starodubets, A. G. Chumakov,
``Stability of Higgs potential in models with three Higgs doublets,''
talk given at the 63rd MIPT All-Russian Scientific Conference, Moscow Institute of Physics and Technology, 
28 November 2020, Moscow, Russia.

\end{thebibliography}
\end{document}